\begin{document}
\title{Free-fermion descriptions of parafermion chains and string-net models}

\author{Konstantinos Meichanetzidis}
\author{Christopher J. Turner}
\author{Ashk Farjami}
\author{Zlatko Papi\'c}
\author{Jiannis K. Pachos}
\affiliation{School of Physics and Astronomy, University of Leeds, Leeds LS2 9JT, United Kingdom}

\date{\today}
\pacs{03.67.Mn, 03.65.Vf, 03.67.Bg
}

\begin{abstract}
Topological phases of matter remain a focus of interest due to their unique properties -- fractionalisation, ground state degeneracy,  and exotic excitations. While some of these properties can occur in systems of free fermions, their emergence is generally associated with interactions between particles. Here we quantify the role of interactions in general classes of topological states of matter in all spatial dimensions, including parafermion chains and string-net models. Using the interaction distance [Nat. Commun. {\bf 8}, 14926 (2017)], we measure the distinguishability of states of these models from those of free fermions. We find that certain topological states can be exactly described by free fermions, while others saturate the maximum possible interaction distance. Our work opens the door to understanding the complexity of topological models and to applying new types of fermionisation procedures to describe their low-energy physics.
\end{abstract}

\maketitle

{\sl Introduction.--} A striking feature of many-body systems is their ability to exhibit collective phenomena without analogue in their constituent particles.
Many recent  investigations into exotic statistical behaviours focus on topologically ordered systems~\cite{Wen1990} that support anyons~\cite{Leinaas1977,Wilczek1982}.
These systems, such as spin liquids~\cite{Anderson1987} and fractional quantum Hall states~\cite{Tsui1982}, exemplify the non-perturbative effects of interactions in many-electron systems. On the other hand, there are systems such as 2D topological superconductors, which support topological excitations -- Majorana zero modes~\cite{ReadGreen,Ivanov2001}.
These systems
can be modelled by free fermions
but lack topological order. 
Hence, a general question arises: is it possible (and with what accuracy) to  describe a given topological state, with anyonic quasiparticles, by a Gaussian state corresponding to some free system?  

In this paper we show that a broad class of topological states
admit free-fermion descriptions.
We use the interaction distance 
$D_\mathcal{F}$~\cite{Turner2017} to measure their distinguishability from free-fermion states. We consider parafermion chains~\cite{Fendley:2012hw},
which are symmetry-protected topological phases,
as well as  two- and three-dimensional string-nets~\cite{LevinWen1,WalkerWang}, and Kitaev's honeycomb lattice model~\cite{KitaevHoneycomb}. These models include RG fixed points of general families of topological systems with excitations that exhibit anyonic and parafermionic statistics. Generally, we find a broad distribution of $D_\mathcal{F}$ values for the ground states of these models. For example, we show that some models nearly maximise $D_\mathcal{F}$, while others have ground states that are Gaussian states with $D_\mathcal{F}=0$, even if their energy spectra cannot be given in terms of free fermions. We support these results by analytical arguments at the fixed points of various models. In addition, we provide strong numerical evidence that the same conclusions hold away from fixed points.
Thus, we establish $D_\mathcal{F}$ as a new measure of the complexity of topological models.
Moreover, as $D_{\cal F}$ is defined for a quantum \emph{state} instead of the full spectrum of a system (like the well-known Bethe ansatz techniques), our results open the way for investigating new types of fermionisation procedures for describing the low-energy physics of interacting systems with $D_\mathcal{F}=0$. 

{\sl Interaction distance.--} To quantify how far a given state is from any Gaussian state, we define the interaction distance~\cite{Turner2017} as $D_\mathcal{F}(\rho) = \min_{\sigma\in\mathcal{F}} D(\rho,\sigma)$, i.e., the minimal trace distance, $D(\rho,\sigma)$, between the reduced density matrix $\rho$ of a bipartitioned system and the manifold ${\cal F}$, which contains all free-fermion reduced density matrices, $\sigma$.
It was proven~\cite{Turner2017} that $D_{\cal F}$ can be expressed exclusively in terms of the entanglement spectrum~\cite{Haldane08} as
\begin{equation}
  \label{eq:DF}
  D_\mathcal{F}(\rho)= \frac{1}{2}\min_{\{\epsilon\}} \sum_a | \rho_a - \sigma_a (\epsilon)|,
\end{equation}
where $\rho_a$ and $\sigma_a $ are the eigenvalues of $\rho$ and $\sigma$, respectively, arranged in decreasing order~\cite{Turner2017}. For Gaussian states we have $\sigma_a = \exp(-\epsilon_0-\sum_j\epsilon_jn_j(a))$, where $\{\epsilon_j\}$ is the set of variational single-particle energies corresponding to free fermion modes and $n_j(a) \in \{0,1\}$ is the modes' occupation pattern corresponding to the given level $a$ of the entanglement spectrum~\cite{Haldane08,PeschelEisler}.
Intuitively, $D_\mathcal{F}$ is dominated by the low-lying part of the entanglement spectrum  and it reveals the correlations between the effective quasiparticles emerging from interactions~\cite{Haldane08}.
Hence, $D_\mathcal{F}$ is expected to be stable under perturbations that do not cause phase transitions~\cite{Turner2017}. We next apply this measure to quantify the distance of various topological states of matter from free fermion states.

{\sl Parafermion chains.--} The 1D Ising model can be mapped to the Majorana chain by means of a Jordan-Wigner transformation~\cite{Schultz,KitaevChain}.
Similarly, $\mathbb{Z}_{N>2}$ generalisations of the Ising model known as the clock Potts model can be expressed in terms of parafermions~\cite{FradkinKadanoff,Fendley}.
Parafermion zero modes may be physically realised at interfaces between 2D topological phases~\cite{Clarke:2013qda,Mong:2014ii}. They are described by the Hamiltonian
\begin{equation}
  \label{eqn:parafermions}
  H_{\mathbb{Z}_N} = - \sum_j \alpha^\dagger_{2j} \alpha_{2j+1} - f \sum_j \alpha^\dagger_{2j-1} \alpha_{2j} + \text{h.c.},
\end{equation}
where the parafermion operators satisfy the generalised commutation relations $\alpha_j \alpha_k = \omega \alpha_k \alpha_j $ for $k>j$, where $\omega = e^{i 2\pi/N}$ and $(\alpha_j)^N=1$. Majorana fermions correspond to $N=2$. Recently, phase diagrams of such models have been mapped out numerically~\cite{Motruk2013,Li2015}. Here we focus on the gapped regime away from the critical points or critical phases~\cite{Elitzur1979}, and neglect other possible terms (e.g., chiral phase factor) in the Hamiltonian. 

We first consider the system at its fixed point $f=0$ and we place the bipartition between regions $A$ and $B$ at a $(2j,2j+1)$-link, as shown in Fig.~\ref{fig:hills32} (Top). This gives a $N$-fold degenerate spectrum $\bar\rho(N)$, with $\bar\rho_a=1/N$ for all $a$~\cite{Fendley:2012hw}, where the overline, $\bar\rho$, denotes the density matrices with flat spectrum. We would like to determine the optimal free state corresponding to such a flat probability spectrum. 
Let $n$ be the greatest integer such that $2^n \le N$. We surmise that the optimal free fermion spectrum is of the form
\begin{equation}
  \label{eq:optimalguess}
  \sigma_\text{ansatz} \simeq \mathrm{diag}\big(N^{-1},\dots,N^{-1},p,\dots,p\big),
\end{equation}
where there are $2^n$ entries for each value $N^{-1}$ and $p$. Normalisation $\tr(\sigma_\text{ansatz})=1$ fixes $p = 2^{-n} - N^{-1}$.
This ansatz is an element of the variational class $\mathcal{F}$, hence $D(\bar\rho,\sigma_\text{ansatz})$ forms an upper bound for $D_\mathcal{F}(\bar\rho(N))$:
\begin{equation}
D_\mathcal{F}(\bar\rho{(N)}) \le 3 - \frac{N}{2^n} - \frac{2^{n+1}}{N}.
\label{eq:upperbound}
\end{equation}
To evaluate $D(\bar\rho,\sigma_\text{ansatz})$ we pad the spectrum of $\bar\rho(N)$ with zeros, a procedure always viable as it leaves the entropy invariant~\cite{Turner2017}.
We find that the numerically computed $D_\mathcal{F}(\bar\rho{(N)})$ is in remarkable agreement with this upper bound, as shown in Fig.~\ref{fig:hills32} (Bottom).
We analytically proved that the two values coincide for $N\leq 6$,
while we numerically verified it for up to $N = 2^8$~\footnote{Supplemental Online Material.}.
Hence, we conjecture that the upper bound of Eq.~\rf{eq:upperbound} is the \emph{exact} maximum of $D_\mathcal{F}(\bar\rho)$. This result  also applies to  the 2D and 3D models presented below.
\begin{figure}[t]
  \includegraphics[width =\linewidth]{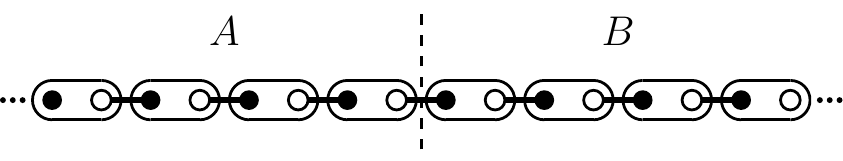}\\
  \includegraphics[width =\linewidth]{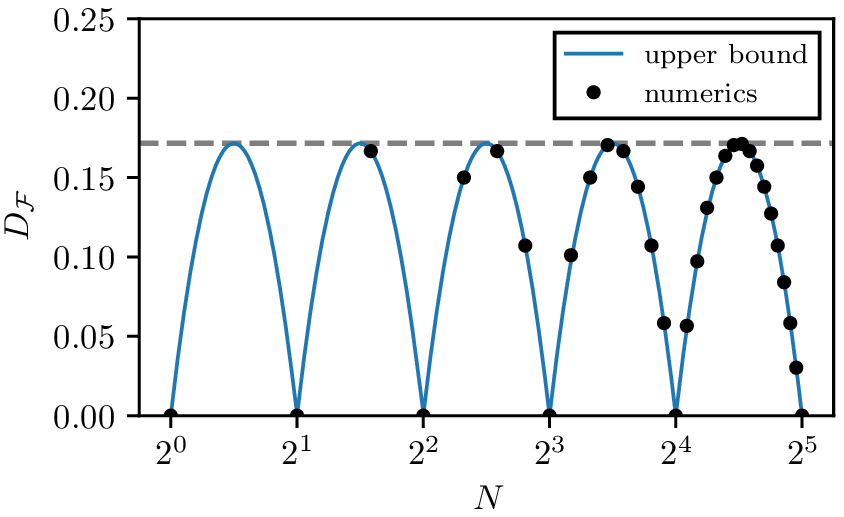}
  \vspace*{-0.8cm}
  \caption{(Top) Parafermion chain at a fixed point with bipartition into $A$ and $B$. (Bottom) Interaction distance, $D_\mathcal{F}(\bar\rho)$, for flat spectra of rank $N$. The solid blue line is the analytical upper bound given in \rf{eq:upperbound} and attains maximal value $D^\text{max}_\mathcal{F}$ (dashed line). The dots are results of the numerical optimisation and they coincide with the analytic upper bound.
  }
  \label{fig:hills32}
\end{figure}

From Eq.~\rf{eq:upperbound} we find that the maximum of the interaction distance is $D_\mathcal{F}^\text{max}=3 - 2\sqrt{2}$. This maximum is 
approached by rational approximations $N/2^n$ of $\sqrt{2}$ for increasing $n$, as shown in Fig.~\ref{fig:hills32} (Bottom).
By the exhaustive numerical maximisation $\max_{\rho} D_\mathcal{F}(\rho)$ for random $\rho$, we have not found states with interaction distance larger than $D_\mathcal{F}^\text{max}$~\footnotemark[\value{footnote}]. Hence, this appears to be the maximum possible value of the interaction distance \emph{for any state}.

The behaviour of the interaction distance for the flat spectra of parafermion chains, shown in Fig.~\ref{fig:hills32}~(Bottom), exhibits a recurring pattern, indicating that $\bar \rho$ has exactly the same interaction distance as $\frac{1}{2}(\bar\rho\oplus\bar\rho)$.
This doubling of the spectrum is equivalent to adding a zero fermionic mode to $\bar \rho$, which is
decoupled from the rest of the modes~\cite{Monogamy,MonogamyEdge},
and thus it is not expected to change its interaction distance. We conjecture that for a generic $\rho$, i.e., with a non-flat spectrum, we still have $ D_\mathcal{F}\left(\frac{1}{2}(\rho\oplus\rho)\right)=D_\mathcal{F}(\rho)$, which is supported by systematic numerical evidence~\footnotemark[\value{footnote}].

In conclusion, we find that $\mathbb{Z}_{N\neq 2^n}$ parafermion chains exhibit $D_\mathcal{F}\neq 0$, indicating that they are interacting in terms of complex fermions, while the inequality \rf{eq:upperbound} gave $D_{\cal F}=0$ for all $\mathbb{Z}_{2^n}$ models.
These results have been derived at the fixed point ($f=0$), and now we address their validity away from the fixed point. The entanglement spectrum and, as a consequence, the interaction distance, can distinguish between the universal and non-universal properties of gapped systems~\cite{Haldane08,Turner2017}.
When the parafermion chain is away from its fixed point,  it acquires a non-zero correlation length, $\xi$. To identify the universal properties of the system through $D_{\cal F}$, the linear size, $L_A$, of the partition $A$ should be $L_A \gg \xi$. The non-universal part is exponentially suppressed in a gapped phase and $D_\mathcal{F}$ predominantly describes the topological properties of the system, as shown in Fig.~\ref{fig:f=/0}. In this figure we see that $\mathbb{Z}_4$ has $D_{\cal F}=0$ for any value of $f$, while the interaction distance for $\mathbb{Z}_3$ approaches a step function through the phase transition. Hence, when the parafermion chain is away from criticality, its ground-state $D_\mathcal{F}$ is a robust characteristic of the topological phase. The value of $D_\mathcal{F}$ is accurately given by the upper bound, Eq. \rf{eq:upperbound}, for sufficiently large system and partition sizes.
\begin{figure}
  \includegraphics[width =\linewidth]{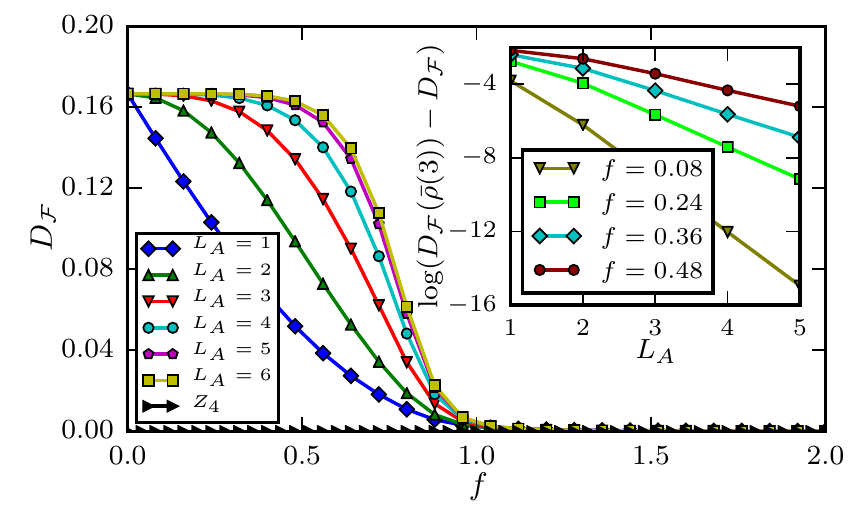}
  \caption{
  Interaction distance for the $\mathbb{Z}_4$ parafermion chains of length $L=8$ and $\mathbb{Z}_3$ of length $L=12$, with various partition sizes, $L_A$, as we move away from the fixed point, $f=0$. While $D_\mathcal{F}=0$ for $\mathbb{Z}_4$ for all values of $f$, it becomes a step function for $\mathbb{Z}_3$ as the partition size increases. (Inset) $\log(D_\mathcal{F}(\bar{\rho}(3))-D_\mathcal{F})$ for the $\mathbb{Z}_3$ chain shows that $D_\mathcal{F}$ converges exponentially to its fixed point value $D_\mathcal{F}(\bar{\rho}(3))$ as we increase $L_A$.}
  \label{fig:f=/0}
\end{figure}

We have also studied the excited states of parafermion models. In the $\mathbb{Z}_{2^n}$ cases, it can be shown that $D_\mathcal{F}=0$ for all excited states at the fixed point. However, at any finite $f>0$, the excited states in general have non-zero $D_\mathcal{F}$ (with the exception of single quasiparticle excitations above the ground state, which remain approximately free for $f>0$ close to the fixed point). This is consistent with the models, e.g., $\mathbb{Z}_{4}$, being  non-integrable for general $f$~\cite{Fendley:2012hw}, although its ground state remains Gaussian. 

To identify the free fermion description of the $\mathbb{Z}_{4}$ ground state we employ a matrix product state approach~\footnotemark[\value{footnote}]. For simplicity we restrict to the $f=0$ fixed point. One can express the ground state of $\mathbb{Z}_{4}$ in terms of the Gaussian ground state of $\mathbb{Z}_{2}\times\mathbb{Z}_{2}$ (which has the same entanglement spectrum) with the help of the  rotation, ${\cal U}= \otimes_j U_j$ ~\footnotemark[\value{footnote}], where the local unitaries $U_j$, acting on each site $j$ of the chain are given by
\be
U_j = 
\left( {\begin{array}{cccc}
		1 & 0 & 0 & 0 \\
		0 & 0 & 1 & 0 \\
		0 & v & 0 & v^* \\
		0 & v^* & 0 & v \\
	\end{array} } \right),
	\label{eqn:4to2U}
	\ee
where $v=\frac{1+i}{2}$. From this relation between the ground states we can identify the parent free fermion Hamiltonian, $H_{\mathbb{Z}_{4}}^\text{free} $, that has the same ground state as $H_{\mathbb{Z}_{4}}$
\be
H_{\mathbb{Z}_{4}}^\text{free} = {\cal U} H_{\mathbb{Z}_{2}\times \mathbb{Z}_{2}} {\cal U}^\dagger,
\label{eqn:MPS}
\ee
where $H_{\mathbb{Z}_{2}\times \mathbb{Z}_{2}}$ is defined similarly to the Hamiltonians in \rf{eqn:parafermions}.
This local Hamiltonian gives rise to the same zero modes localised at the end points of a chain like the  $\mathbb{Z}_{4}$ model. Nevertheless, their excitation spectra need not coincide. The construction in Eq.~(\ref{eqn:MPS}) can be extended to all $\mathbb{Z}_{2^n}$ models with $D_{\cal F}=0$. Note that in the case of $\mathbb{Z}_{2}$ parafermions the corresponding parent Hamiltonian, $H^\text{free}_{\mathbb{Z}_{2}}$, is identical to the $H_{\mathbb{Z}_{2}}$, which describes the Majorana chain.

{\sl String nets.--} We now turn to the string-net models~\cite{LevinWen1, LinLevin}. These are 2D RG fixed-point models that support topological order and anyon excitations. The models are defined in terms of
irreducible representations or `charges' of a finite group, $\mathcal{C}=\{1,...,n\}$, that parametrise the edges of a honeycomb lattice,
as shown in Fig.~\ref{fig:2D} (Left).
These charges obey the fusion rules $x\times y=\sum_{z}N_{xy}^zz$, where $N_{xy}^z$ is the multiplicity of each fusion outcome.
For each charge, $x$, the quantum dimension $d_x$ is defined that satisfies $d_x\times d_y=\sum_{z}N_{xy}^zd_z$.

The ground state of a string-net model can be interpreted as a superposition of all configurations of charge loops. The probability spectrum from any bipartition into single component regions $A$ and $B$  is determined by all string configurations, $a$, on the boundary $\partial A$ which fuse to the vacuum
\begin{equation}
  \rho_a=\frac{\prod_{j\in a}d_{x_j}}{\mathcal{D}^{2(|\partial{A}|-1)}},
  \label{eqn:stringnet}
\end{equation}
where $\mathcal{D}=\sqrt{\sum_x d_x^2}$ is the total quantum dimension of the group and $x_j$ is an element of the configuration, $a$, of charges at the boundary links, as shown in Fig.~\ref{fig:2D}~(Left). Hence, we can directly evaluate the entanglement spectra of all string-net models, Abelian or non-Abelian, and for any partition~\cite{Alex}.
  
We initially consider string-nets defined with an Abelian group $\mathbb{Z}_N$. These models have $d_x=1$ for all $x\in \mathcal{C}$ and thus $\mathcal{D}=\sqrt{N}$. From Eq.~\rf{eqn:stringnet} we find that the corresponding probability spectrum for any bipartition is flat with degeneracy $N^{|\partial{A}|-1}$. Hence, the interaction distance $D_\mathcal{F}$ is directly determined from Eq.~\rf{eq:upperbound} as in the case of parafermion chains.

The cases with $N=2^n$ can be exactly described by fermionic zero modes, giving $D_{\mathcal{F}}=0$ for any partition size. Hence, the ground states of these models are Gaussian states. This is a surprising result as anyonic quasiparticles are expected to emerge in interacting systems. Nevertheless, the optimal free states are
not necessarily local and their energy spectrum is not necessarily given by filling of single fermion modes.
For $N=2$ we obtain the well known Toric Code~\cite{KitaevToricCode}. We now show that the fermionisation of this model is given in terms of free lattice fermions coupled to a $\mathbb{Z}_2$ gauge field.

Kitaev's honeycomb lattice model~\cite{KitaevHoneycomb} is an interacting model that supports vortices with Abelian Toric Code or non-Abelian Ising anyonic statistics, depending on its coupling regime. Nevertheless, for fixed vortex configurations its Hamiltonian is reduced to free fermions living on the vertices of the honeycomb lattice coupled to a static $\mathbb{Z}_2$ gauge field $u$ that resides on its links~\cite{KitaevHoneycomb}. When we bipartition the ground state of the system the reduced density matrix splits into a gauge and a fermionic part, i.e. $\rho=\bar\rho_u\otimes \rho_\phi$~\cite{YaoQi}. The gauge part corresponds to a $\mathbb{Z}_2$ flat spectrum giving $D_\mathcal{F}(\bar\rho_u)=0$ and the fermionic part corresponds to free fermions with $D_\mathcal{F}(\rho_\phi)=0$. 
This means that $\rho$, as a tensor product of free fermion entanglement spectra, has also $D_\mathcal{F}(\rho)=0$ for any partition, rendering the ground state Gaussian. Hence, free fermions coupled to a $\mathbb{Z}_2$ gauge field provide the fermionisation prescription of the $\mathbb{Z}_2$ string-net model.

\begin{figure}
	\includegraphics[width =\linewidth]{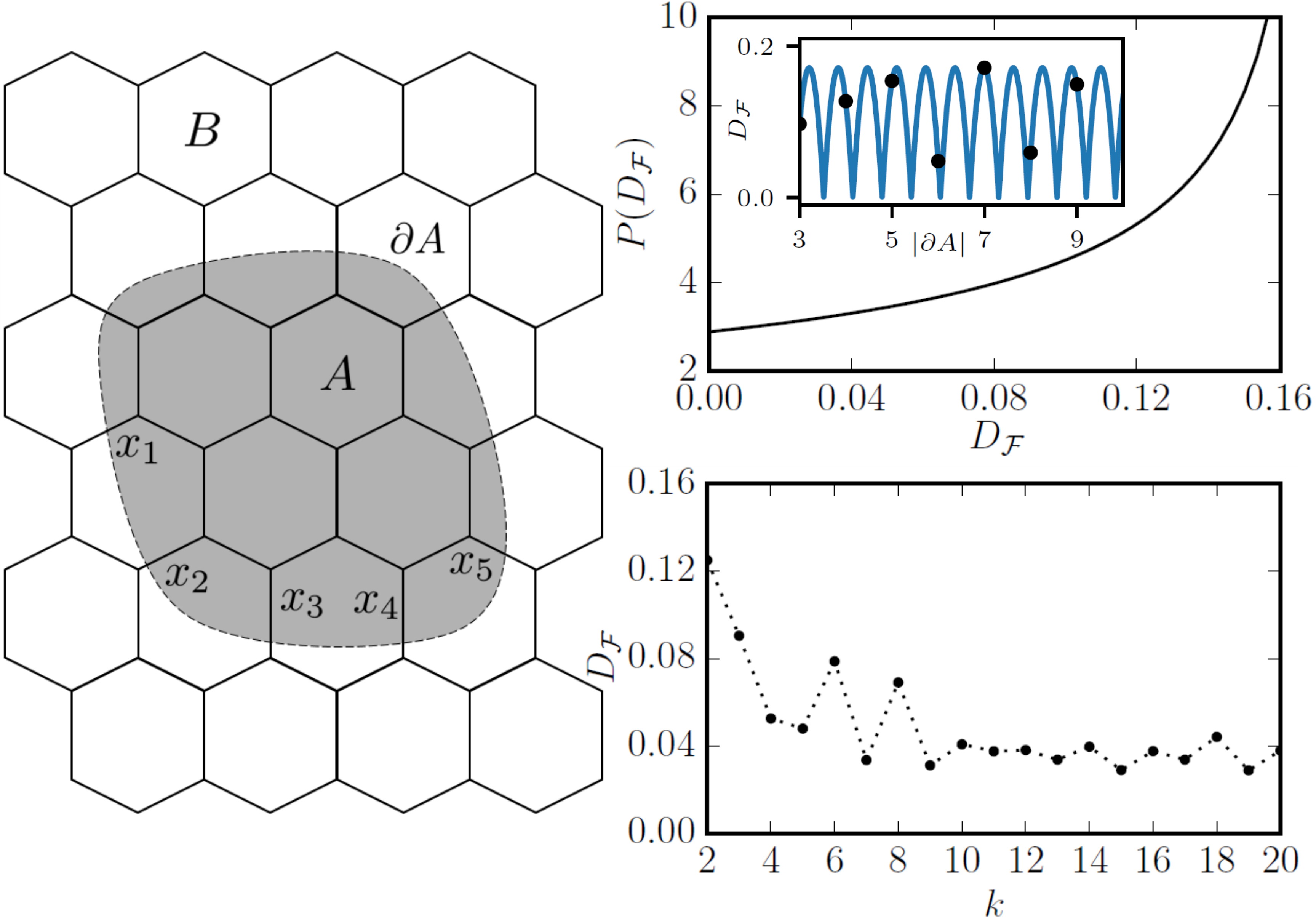} 
	\caption{(Left) The string-net model on a honeycomb lattice with a bipartition into $A$ and $B$. A configuration of charges $x_j$ is depicted at the links of the boundary $\partial A$. (Right, Top) Distribution $P(D_\mathcal{F})$ of the interaction distance for varying $|\partial A|$. (Inset) The dots represent the numerically obtained interaction distance as a function of $|\partial A|$. (Right, Bottom). Plot of $D_\mathcal{F}$ for SU(2)$_k$ against $k$ for a partition with $|\partial A|=3$.
	}
	\label{fig:2D}
\end{figure}

For string nets with $N\neq 2^n$, $D_\mathcal{F}$ is always non-zero. In particular, its value depends on the size $|\partial{A}|$ of the partition boundary. We investigate its behaviour by studying the distribution $P(D_\mathcal{F})$ of $D_\mathcal{F}$ by varying the size $|\partial{A}|$ of the boundary for a certain model $\mathbb{Z}_N$. This distribution can be shown to be given by
$ P(D_{\mathcal{F}})=\frac{2}{\ln 2}/\sqrt{1+D_{\mathcal{F}}(D_{\mathcal{F}}-6)}$,
which, surprisingly, is independent of $N$~\footnotemark[\value{footnote}]. Hence, there exist partitions that asymptotically maximise $D_{\cal F}$  for all $N\neq 2^n$, as shown in Fig.~\ref{fig:2D}~(Right, Top). Therefore, all $\mathbb{Z}_N$ Abelian string-nets either admit a free-fermion description for any partition or they form a class for which
the manifestations of interactions are equivalent.

We next consider the non-Abelian string-net models. For concreteness, we take the finite group to be $SU(2)_k$ for various levels $k\geq2$. This group gives rise to string-net models that support a large class of non-Abelian anyons, such as the Ising anyons for $k=2$, with statistics similar to Majorana fermions, or the Fibonacci anyons for $k=3$, that are universal for quantum computation~\cite{Fibonacci,JiannisBook}. For simplicity we consider the interaction distance for a single site partition that has $|\partial A|=3$. We find that $D_{\cal F}\neq 0$ for all $k\leq20$, as shown in Fig.~\ref{fig:2D} (Right, Bottom). Hence, it is not possible to find a free fermion description of these non-Abelian string-net models. Nevertheless, it is possible to have chiral non-Abelian models that are not RG fixed points, which admit a description of their ground state in terms of free fermions. As we have seen, Kitaev's honeycomb lattice model falls in this category.

The string-net construction given above for 2D topological models directly generalises to 3D topological systems, with entanglement spectra also given by Eq.~\rf{eqn:stringnet}. A more powerful generalisation is in terms of the Walker-Wang models, which have a rich behaviour in their bulk and at their boundary~\cite{WalkerWang}. Investigation similar to the  string-net SU(2)$_k$ models shows that the interaction distance depends not only on the size of $|\partial A|$, but also on the topology of $A$
due to braiding~\footnotemark[\value{footnote}].

{\sl Conclusions.--} We have quantified the effect of interactions in the ground states of broad classes of topological phases of matter in all spatial dimensions. For parafermion chains at the fixed point, 
we found that the partition size does not affect the value of $D_{\cal F}$. In contrast, for 2D string nets the size of the boundary matters, but not its geometry, while for Walker-Wang 3D models the topology of the boundary also becomes relevant.

Surprisingly, we discovered that the $\mathbb{Z}_{2^n}$ parafermion chains, as well as $\mathbb{Z}_{2^n}$ string-nets and Walker-Wang models, all have ground states with $D_{\cal F} = 0$ for any bipartition at the fixed point. Based on similar arguments, it is possible to show in such cases that $D_{\cal F}=0$ holds also for their excited states. Hence, the exciting possibility arises that the fermionisation procedure we applied to the $\mathbb{Z}_4$ parafermion model could be extended to all these states.
Moreover, we numerically demonstrated that $D_{\cal F}\approx 0$ continues to hold in the ground state and low-lying excitations, even when the system is away from the fixed point. However, the highly excited states typically have $D_{\cal F}\neq 0$ away from the fixed point.

Identifying Gaussianity in the low-lying energy eigenstates of a model can refine the notion of fermionisation procedures employed to solve quantum Hamiltonians. Such procedures are applicable, in the usual sense, if a system has $D_{\cal F}=0$ for all possible bipartitions and in all its eigenstates, while at the same time the energy spectrum is also free. A more subtle possibility appears when $D_{\cal F} = 0$ for some eigenstates (and all cuts), but the energy spectrum is not that of free fermions. We believe integrable systems~\cite{Sutherland} fall into this category. We also note that $D_{\cal F}=0$ in low-lying eigenstates is in principle compatible with anyon statistics because the latter only emerges when one interpolates adiabatically between different sectors of the conserved charges of the model~\cite{VilleNJP}. Finally, in some non-integrable cases like the $\mathbb{Z}_4$ parafermion model away from the fixed point, we found that $D_{\cal F}$ can surprisingly be zero in the ground state and the low-lying excited states of the system. This opens up the exciting possibility of describing the low-energy sectors of such interacting systems in terms of new types of free-particle models. 

{\bf Acknowledgements.} We thank Paul Fendley, Alex Bullivant and Jake Southall for inspiring comments. 
This work was supported by the EPSRC grants EP/I038683/1, EP/M50807X/1 and EP/P009409/1. Statement of compliance with EPSRC policy framework on research data: This publication is theoretical work that does not require supporting research data.
  
\bibliography{references}
\bibliographystyle{apsrev4-1}

\newpage
\onecolumngrid

\begin{center}
\textbf{\large Supplemental Online Material for ``Free-fermion descriptions of parafermion chains and string-net models'' }\\[5pt]
\begin{quote}
{\small In this Supplementary Material we discuss the ansatz and the derived upper bound for flat spectra. We then present numerical evidence that the upper bound is the exact solution for flat spectra. We provide numerical evidence for the zero-mode conjecture $ D_\mathcal{F}\left(\frac{1}{2}(\rho\oplus\rho)\right)=D_\mathcal{F}(\rho)$ and the conjecture that the maximum value of the interaction distance is $D_\mathcal{F}^\text{max}=3-2\sqrt{2}$.
Furthermore, we support in detail the statement made in the main text on the existence of an exact mapping of the ground state of the $\mathbb{Z}_4$ chain to that of $\mathbb{Z}_2 \times \mathbb{Z}_2$ using the formalism of matrix product states.
Following on, we offer a simple proof for the exactness of the upper bound of $D_\mathcal{F}$ for the rank-$6$ maximally entangled state, inspired by ideas from order theory applied on free-fermion spectra. Next, we calculate the distribution functions of $D_\mathcal{F}$ which characterise the parafermion models and string-nets. Finally, we calculate $D_{\cal F}$ for the topologically distinct subregions of a 3D Walker-Wang model.}\\[20pt]
\end{quote}
\end{center}
\twocolumngrid

\setcounter{equation}{0}
\setcounter{page}{1}
\setcounter{figure}{0}

\appendix

{\sl Optimal free ansatz for flat spectra.--} Here we describe in detail the construction of the optimal free-fermion description of a
flat $N$-rank entanglement spectrum.
We guess that it is optimal to exactly match as many of the highest probability levels as possible.
The most that can be matched are $2^n$, where $n$ is the greatest integer such that $2^n \le N$.
Considering the entanglement energies,
i.e. negative logarithms of the eigenvalues of $\rho$,
this is achieved by requiring $\sigma$ to contain $n$ zero-modes.
We further assume that the optimal $\sigma$ has only one further non-trivial mode whose splitting parameter is now fixed by the requirement that the first $2^n$ levels of $\sigma$ have eigenvalue $N^{-1}$ leading to the form given in the main text:
\begin{equation}
  \sigma \simeq \mathrm{diag}\big(N^{-1},\dots,N^{-1},p,\dots,p\big),
\end{equation}
Then we can write $\sigma$
as a tensor product of two-level (fermion) modes as
\begin{equation}\label{eq:appoptimalguess}
  \sigma \simeq \bigotimes_{i} \mathrm{diag}\left(\frac{1}{2}+s_i,\,\frac{1}{2}-s_i\right),
\end{equation}
where for $n$ of the modes $s_i=0$ and the one remaining mode $s_i = N^{-1}2^n - 1 / 2$,
ensuring that $\sigma$ is free.

There are two contributions to the upper-bound $D(\bar\rho,\sigma)$.
The first is from the entanglement levels with index $2^n + 1 \le k \le N$ for which the probability difference is between $N^{-1}$ and $p$,
illustrated by the dashed lines between the top and middle rows of eigenvalues in Fig.~\ref{fig:Probabilities}.
The second is from levels with index $N + 1 \le k \le 2^{n+1}$ for which the probability difference is between $0$ and $p$,
illustrated by the dashed lines between the middle and bottom rows of eigenvalues in Fig.~\ref{fig:Probabilities}.
In the trivial case where $N=2^n$ for some $n$, then $\bar\rho(N)$ can be reproduced by $n$-many zero-modes and the spectrum is free with $D_\mathcal{F}=0$ as seen in Fig.~\ref{fig:hills_on_fire}.

\begin{figure}[!htb]
  \includegraphics[width=0.95\columnwidth]{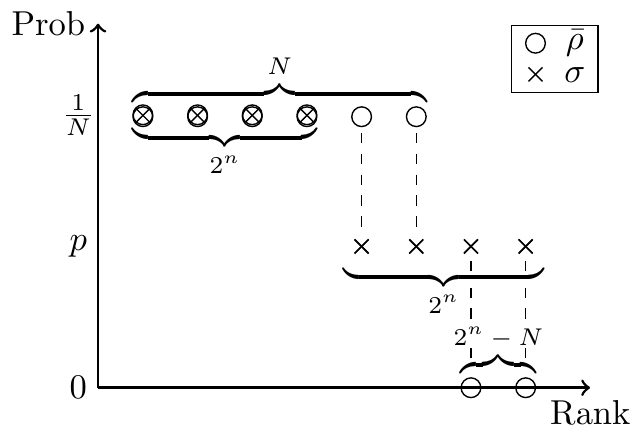}
  \caption{
    Probability spectra showing the flat spectrum of degeneracy $N$ (including zeros padding the spectrum to the next power of two) and the corresponding $\sigma_\text{ansatz}$ defined in Eq.\rf{eq:appoptimalguess}.
  }
  \label{fig:Probabilities}
\end{figure}

We compared this analytic upper bound with results from numerical optimisation
 for $N$ up to $2^8$, where we use the same Monte-Carlo basin hopping strategy used successfully previously for the 1D quantum Ising model in a magnetic field~\cite{Turner2017}.
For larger $N$ the increasing size of the optimisation problem results in the numerical minimisation failing to consistently find the global minimum.
Remarkably, numerical minimisation never finds results below the analytic upper-bound making a convincing case that this upper bound is in fact the exact result.

\begin{figure}[!htb]
  \includegraphics{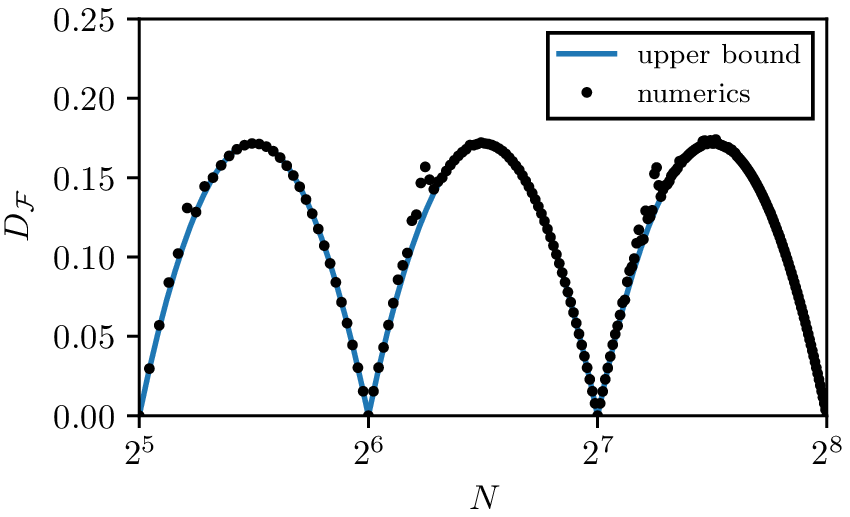}
  \caption{
    Interaction distance $D_\mathcal{F}(\bar\rho)$ for flat spectra of rank $N$.
    Results of numerical optimisation and the analytic upper bound $D_\mathcal{F}(\bar\rho{(N)}) \le 3 - \frac{N}{2^n} - \frac{2^{n+1}}{N}$ are compared, revealing that numerical optimisation never improves on the analytic upper-bound.
    The numerical data features deviations above the analytic curve which represent intermittent failure of finding the global minimum.
  }
  \label{fig:hills_on_fire}
\end{figure}

{\sl Numerical evidence for conjectures.--} Here we provide numerical evidence for conjectures regarding $D_\mathcal{F}$ which we state in the main text.
In particular, we indicate that the zero-mode conjecture, $D_\mathcal{F}(\rho)=D_\mathcal{F}(\frac{1}{2}(\rho\oplus\rho))$, holds,
as well as that the maximal possible value of the interaction distance is $D^\text{max}_\mathcal{F}$.
As stated in Ref.~\cite{Turner2017}, we use a basin-hopping algorithm which
collects values of local minima via the Nelder-Mead method for each basin of the cost function landscape.
We show that apparent violations of these conjectures
correspond to cases where the optimisation does not reach a global minimum
and increasing the number of basins results in the conjectures holding true.

Let $\rho$ represent a generic diagonal density matrix.
Its doubly degenerate version, $\frac{1}{2}(\rho\oplus\rho)$,
can be interpreted as the result of adding a zero-entanglement-energy in the system,
where entanglement energy is defined as the negative logarithm of a probability~\cite{Haldane08}.
This becomes clear when considering that each member of a degenerate pair of eigenvalues
corresponds to the zero-mode being occupied or not.
We compute $D_\mathcal{F}^\text{diff} =  D_\mathcal{F}(\rho)-D_\mathcal{F}(\frac{1}{2}(\rho\oplus\rho))$
for random diagonal density matrices.
The distribution $P(D_\mathcal{F}^\text{diff})$ is peaked at zero indicating the validity of the zero-mode conjecture,
as shown in Fig.\ref{fig:Conjectures} (Top).
The peak is more prominent when the number of basins is increased.
Note that $D_\mathcal{F}(\frac{1}{2}(\rho\oplus\rho)) \leq D(\frac{1}{2}(\rho\oplus\rho),\frac{1}{2}(\sigma\oplus\sigma)) = D_\mathcal{F}(\rho)$, where $\sigma$ is the optimal free state of $\rho$.
The inequality holds because $\frac{1}{2}(\sigma\oplus\sigma)\in\mathcal{F}$ is a member of the variational class $\mathcal{F}$.
Thus, we attribute $D_\mathcal{F}^\text{diff}<0$ to failure of the minimisation in finding the global minimum for 
$D(\frac{1}{2}(\rho\oplus\rho)$ due to the greater number of input probabilities.

Regarding the maximal possible interaction distance,
we use yet again basin-hopping
in order to perform the maximisation
$\max_{\rho} D_\mathcal{F}(\rho)$.
For each instance of $\rho$, we find $D_\mathcal{F}$ by the Nelder-Mead method,
with the condition that if the cost function is greater than the conjectured value $3-2\sqrt{2}$,
then basin-hopping is performed for a large enough number of basins so that finding the global minimum
is better approximated, which we never find to be above $3-2\sqrt{2}$ as shown in Fig.\ref{fig:Conjectures} (Bottom).

\begin{figure}[!htb]
  \includegraphics[width=0.95\columnwidth]{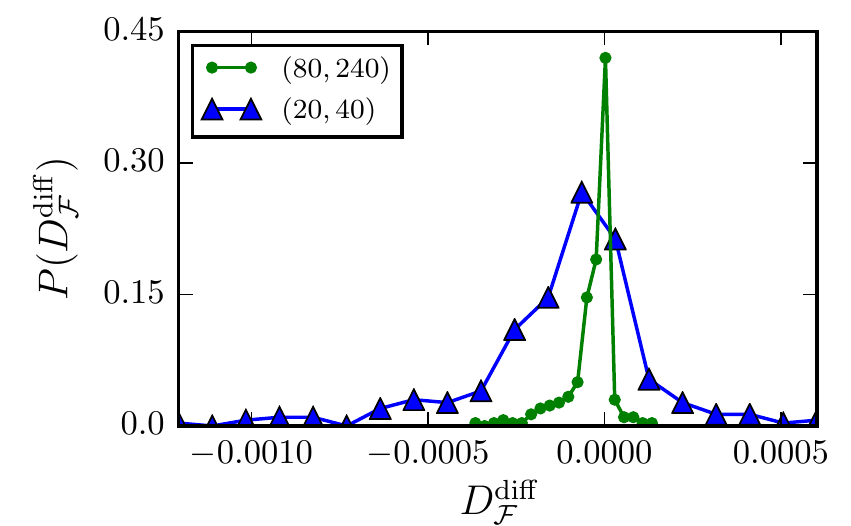}
  \includegraphics[width=0.95\columnwidth]{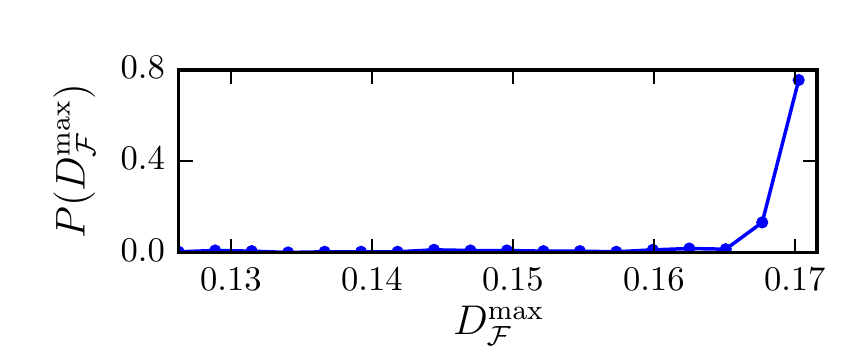}
  \caption{ (Top) Distribution $P(D_\mathcal{F})$ of the difference between the interaction distance of random state $\rho$ and its doubly degenerate realisation $\frac{1}{2}(\rho\oplus\rho)$.
The sample contains $300$ instances of $\rho$.
Increasing the number of basins used in the minimisations $D_\mathcal{F}(\rho)$
and $D_\mathcal{F}(\frac{1}{2}(\rho\oplus\rho))$ respectively, indicated as a tuple (legend), drives the distribution to a sharp peak.
(Bottom) Distribution of outputs of the maximisation of $D_\mathcal{F}$ performed $400$ times for spectra of rank randomly selected between $3$ and $8$.
Here, the number of basins in the minimisation of $D_\mathcal{F}$ is $30$.
The number of basins for the maximisation is randomly selected from the set $\{5,15,50\}$.
The maximisation achieves the conjectured upper bound $3-2\sqrt{2}\approx 0.1716$ but never exceeds it.
}
  \label{fig:Conjectures}
\end{figure}

{\sl Entanglement lattices.--} Here we present a view of the structure of free-fermion spectra in terms of order theory,
which aids in deriving exact results for $D_\mathcal{F}$ in special cases of flat spectra.
Each level of a free entanglement spectrum $\sigma$ is labelled by a set of occupation numbers.
We can define a partial order $a \preceq b$ for occupation-pattern labels $a$ and $b$ to hold if and only if $\sigma_a \ge \sigma_b$ for any free entanglement spectrum $\sigma$.
This partial order induces an equivalent covering relation $a \rightarrow c$ if and only if for any $\sigma$ we have $\sigma_a \ge \sigma_c$ and there exists no other pattern $b$ such that $\sigma_a \ge \sigma_b \ge \sigma_b$.
In Fig.~\ref{fig:f3young} the vertices are the occupation patterns and the arrows connecting them are the covering relations, forming an example of a Hasse diagram~\cite{priestley2002}.
The example in Fig.~\ref{fig:f3young} is the free entanglement partial order $\mathcal{F}_3$ for the case of three modes.

The partial ordering of the occupation-pattern coordinates is the strongest ordering compatible with any spectrum $\sigma$ which assigns to each mode $k$ an energy $E_k$ which is non-decreasing in $k$.
This ordering is known as the dominance order~\cite{Brylawski:1973bk}.
Another way of describing each level is to instead count the occupation of each energy gap.
Occupying an energy mode is equivalent to occupying all the energy gaps beneath it, and the transformation between the two pictures is known as majorisation.
Each energy-gap is non-negative but otherwise unconstrained thus the strongest partial order compatible with all $\sigma$ follows a coordinate order in these majorised coordinates.
Majorisation theory is a familiar topic in quantum information and has been the source of many fruitful applications, see Refs.~\cite{Nielsen:1999zza,Vidal:2000ci} for example.

Each set of majorised coordinates can be related to a Young diagram where the number of boxes in each column is the corresponding majorised coordinate and the number of boxes in each row labels the energy level occupied by each excitation (see Fig.~\ref{fig:f3young}).
Choosing a covering relation to mean the addition of a single box to the Young diagram defines a lattice known as Young's lattice,
a sublattice of which can be identified as the free entanglement lattice.
If we were considering free boson models instead of free fermions there would be no exclusion of multiple occupancy and the entanglement lattice would be isomorphic to Young's lattice.

\begin{theorem}
  \label{LatticeRGThm}
  \label{thm:m2_to_m1}
  $D_\mathcal{F}(\bar\rho(2^{n+1} - 2)) = D_\mathcal{F}(\bar\rho(2^n-1))$ for any integer $n \ge 1$
\end{theorem}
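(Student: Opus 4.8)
The plan is to recognise the rank-$(2^{n+1}-2)$ flat spectrum as a \emph{doubling} of the rank-$(2^n-1)$ one and then to prove that $D_\mathcal{F}$ is invariant under this doubling for these particular spectra. Since $\bar\rho(M)$ consists of $M$ eigenvalues equal to $1/M$, the normalised block sum $\tfrac12(\bar\rho(M)\oplus\bar\rho(M))$ consists of $2M$ eigenvalues equal to $1/(2M)$, i.e.\ it is exactly $\bar\rho(2M)$. Setting $M=2^n-1$ gives $2M=2^{n+1}-2$, so the theorem is equivalent to the doubling identity $D_\mathcal{F}\big(\tfrac12(\bar\rho(2^n-1)\oplus\bar\rho(2^n-1))\big)=D_\mathcal{F}(\bar\rho(2^n-1))$, a flat-spectrum instance of the zero-mode conjecture stated in the main text.

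First I would dispatch the easy inequality $\le$. Let $\sigma^\ast$ be an optimal free state for $\bar\rho(M)$. Adjoining a single zero mode to $\sigma^\ast$ — a mode whose single-particle energy vanishes — produces the free state $\tfrac12(\sigma^\ast\oplus\sigma^\ast)\in\mathcal{F}$, since the vacant and occupied zero mode contribute identical factors, merely doubling every level while halving its weight. Comparing sorted spectra, the two copies of each level sit in the adjacent positions $(2a-1,2a)$, so no pair straddles the rank boundary at $2M$; the target value is $1/(2M)$ on the first $2M$ slots and $0$ thereafter, and the contributions collapse to $\tfrac12\sum_a|1/M-\sigma^\ast_a|$. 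Hence $D\big(\bar\rho(2M),\tfrac12(\sigma^\ast\oplus\sigma^\ast)\big)=D(\bar\rho(M),\sigma^\ast)=D_\mathcal{F}(\bar\rho(M))$, and minimising over $\mathcal{F}$ gives $D_\mathcal{F}(\bar\rho(2M))\le D_\mathcal{F}(\bar\rho(M))$.

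The substance of the theorem is the reverse inequality, and this is where the entanglement-lattice picture is essential. I would argue that an optimal free approximant $\tau$ to the flat spectrum $\bar\rho(2M)$ can be chosen to carry a zero mode itself, i.e.\ to be symmetric under the $\oplus$-doubling involution. Concretely, I would exploit the dominance/Young-lattice structure of free spectra: because the target is flat and of \emph{even} rank $2M$, the covering relations should force the minimising configuration of occupation patterns to pair up symmetrically, so that any optimal $\tau$ either already possesses a zero mode or can be deformed to one without increasing $\tfrac12\sum_k|(\bar\rho(2M))_k-\tau_k|$. Once $\tau=\tfrac12(\tau'\oplus\tau')$ with $\tau'$ free on one fewer mode, the same sorted-spectrum bookkeeping as above yields $D(\bar\rho(2M),\tau)=D(\bar\rho(M),\tau')\ge D_\mathcal{F}(\bar\rho(M))$, completing the proof.

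The main obstacle is precisely this lower bound: showing that no free state approximates the doubled flat spectrum strictly better than the doubled optimum, equivalently that the optimum inherits the zero mode. For a generic $\rho$ this is only the conjectural half of the zero-mode statement, but the rigidity of the flat target together with the self-similarity of the free entanglement lattice — one extra zero mode corresponds to one extra copy of the $n$-mode lattice embedded inside the $(n+1)$-mode lattice — is what I expect to make the argument go through uniformly in $n\ge1$. As a consistency check, both sides evaluate under the two-value ansatz to the common value $1+2^{1-n}-2^n/(2^n-1)$, which at $n=1$ gives $0$ and at $n=2$ gives $1/6$, confirming that the identity is at least compatible with the established upper bound $3-N/2^m-2^{m+1}/N$.
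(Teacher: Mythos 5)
Your reduction of the statement to the doubling identity $\bar\rho(2M)=\tfrac12(\bar\rho(M)\oplus\bar\rho(M))$ with $M=2^n-1$, and your proof of the inequality $D_\mathcal{F}(\bar\rho(2M))\le D_\mathcal{F}(\bar\rho(M))$ via the direct-sum ansatz $\tfrac12(\sigma^\ast\oplus\sigma^\ast)$, are both correct and coincide with the paper's upper-bound step. The problem is the reverse inequality, which you do not actually prove: you assert that an optimal free approximant to $\bar\rho(2M)$ "can be deformed to carry a zero mode" because the lattice structure "should force" symmetric pairing, and you concede this is the conjectural half of the zero-mode statement. That concession is fatal as written — the lower bound is precisely the content of the theorem, and no mechanism is given for why the deformation cannot strictly decrease the cost. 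A warning sign is that your argument for the hard direction uses only the \emph{evenness} of $2M$, never the special form $M=2^n-1$; if evenness sufficed, the full flat-spectrum zero-mode conjecture would follow, whereas the paper is able to prove the identity only for $N=2^{n+1}-2$ and explicitly flags the general case as obstructed by "the different linear orderings compatible with the partial order."

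The paper's proof avoids any claim about the structure of the optimum. For an \emph{arbitrary} free $\sigma$ on $n+1$ modes, it pairs levels by the occupation $a\in\{0,1\}$ of the lowest-energy mode and applies the triangle inequality,
\begin{equation}
  \sum_{a=0,1}\left|\rho_{\pi(a,b)}-\sigma_{(a,b)}\right|
  \;\ge\;
  \left|\rho'_{\pi(b)}-\sigma'_{(b)}\right|,
\end{equation}
where $\sigma'$ is the marginal of $\sigma$ over that mode — again a free spectrum, now on $n$ modes. The step that makes this work, and the reason $N=2^{n+1}-2$ is special, is that the padded spectrum of $\bar\rho(N)$ then contains exactly \emph{two} zeros, and these are forced onto the two smallest levels of any free spectrum, namely the patterns $(1,1,\dots,1)$ and $(0,1,\dots,1)$, which differ only in the lowest-energy mode and hence form a single pair. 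Consequently every pair-sum of $\rho$ is unambiguously either $2/N=1/(2^n-1)$ or $0$, i.e.\ the pair-sums reproduce the padded $\bar\rho(2^n-1)$ exactly, giving $D(\bar\rho(2^{n+1}-2),\sigma)\ge D(\bar\rho(2^n-1),\sigma')\ge D_\mathcal{F}(\bar\rho(2^n-1))$ for every $\sigma$, hence the missing lower bound. If you want to salvage your route, you would need to prove your zero-mode claim for the optimum rather than expect it; the paper's marginalisation trick is the way around exactly that difficulty.
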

\begin{proof}
  The renormalisation step of the free entanglement lattice shown in Fig.~\ref{fig:f3young} is applied to $D_\mathcal{F}$ by using the triangle inequality on paired contributions to the cost function
  \begin{align}
    \sum_{a=0,1} \left|\rho_{\pi(a,b)} - \sigma_{(a,b)}\right|
    & \ge \left|\sum_{a=0,1}\rho_{\pi(a,b)} - \sum_{a=0,1}\sigma_{(a,b)}\right| \nonumber\\
    & \ge \left|\rho'_{\pi(b)} - \sigma'_{(b)}\right|,
  \end{align}
  where $\pi$ is a map from the occupation numbers of $\sigma$ to the eigenvalue ordering of $\rho$.
 We denote with $a$ the occupation number of a distinguished mode and $b$ is a tuple carrying the occupancies of all the other modes.
  When the degeneracy is $N=2^{n+1}-2$ there is no ambiguity in the assignment of occupations to the spectrum of $\rho$ because the final two levels are always the fully occupied pattern $(1,1,\dots)$ and the same but without a particle in the lowest energy mode $(0,1,\dots)$.
  By integrating over the occupations of the lowest energy mode in this way we produce a renormalised $\sigma'$ and the renormalised flat spectrum $\rho'$ which now has degeneracy $N'=2^n-1$.
  For the case of three fermionic modes this is illustrated in Fig.~\ref{fig:f3young} where the dashed ellipses group levels that are integrated together in this procedure.
  This provides a lower bound on $D_\mathcal{F}(\bar\rho(2^{n+1}-2))$ which is the same as the upper bound found by considering the direct sum ansatz $\frac{1}{2}(\sigma\oplus\sigma)$ demonstrating equality.
\end{proof}

\begin{figure}[th!]
  {\resizebox{\columnwidth}{!}{\includegraphics{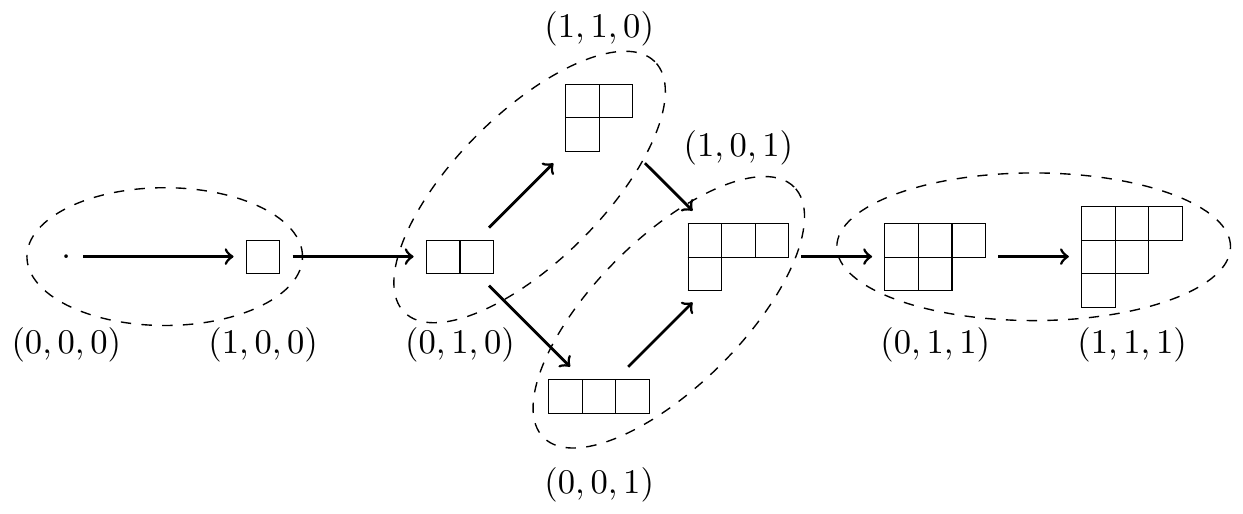}}}
  \caption{Hasse diagram for the free fermion entanglement lattice in the case of 3 fermionic modes ($\mathcal{F}_3$).
  Vertices contain the Young diagrams which represent the occupation numbers $(n_1,n_2,n_3)$ of the three modes with which they are labelled.
  Grouped vertices are renormalised by integrating over the first occupation number to give the vertices of $\mathcal{F}_2$.}
  \label{fig:f3young}
\end{figure}

Considering that we have an exact result $D_\mathcal{F}(\bar\rho_{(3)})=1/6$~\cite{Turner2017} we find as a corollary that $D_\mathcal{F}(\bar\rho_{(6)}) = D_\mathcal{F}(\bar\rho_{(3)}) = 1/6$ and hence the flat-spectrum upper-bound is exact for $N=6$.
This is the largest $N$ for which we have such a result besides powers of $2$.
  
We refer to this technique as entanglement lattice renormalisation and it appears a promising direction for analytical methods to derive lower-bounds to $D_\mathcal{F}$, something which is challenging since it is defined by a minimisation problem.
The difficulty in applying this method generally lies in dealing with all the different linear orderings compatible with the partial order.
\autoref{thm:m2_to_m1} falls in the special case where the ordering could be ignored, making it relatively simple.
    
{\sl Local equivalence between parafermion chain states.--}
We have found that a $\mathbb{Z}_{pq}$ chain and $\mathbb{Z}_p\times\mathbb{Z}_q$ chain have equivalent entanglement and that this leads to $D_\mathcal{F}=0$ for any eigenstate associated to any bipartition of a fixed-point $\mathbb{Z}_{2^n}$ chain.
We may wonder how this equivalence for all cuts could be coherently extended to an equivalence between states.
This can be achieved for the fixed-point parafermion chains as we will demonstrate.
In particular this means every eigenstate of a fixed-point $\mathbb{Z}_4$ chain is equivalent to a $\mathbb{Z}_2\times\mathbb{Z}_2$ chain according to a local unitary.

This problem can be approached by first constructing a matrix product state (MPS) representation
for the ground states of parafermion chains~\cite{ParaMPS}.
A translationally-invariant MPS representation is a decomposition,
\begin{equation}
  \ket{\psi} = \sum_{i_1,\cdots,i_L}\left(\eta\right|\Bigg(\prod_{j=1}^{L}\Gamma_{i_j}\Bigg)\left|\rho\right) \ket{i_1i_2\cdots{}i_L}
  \label{eqn:mps}
\end{equation}
into site-tensor $\Gamma$ and environment vectors $\left|\rho\right)$ and $\left(\eta\right|$~\cite{fannes1992,OrusPracticalMPSIntro}.
At the topological fixed point of a $\mathbb{Z}_N$ parafermion chain the site-tensor can be chosen as
\begin{align}
  \begin{tikzpicture}[baseline=-2ex]
    \node (left) at (-0.7,0) {$j$};
    \node (right) at (+0.7,0) {$k$};
    \node (gamma) [draw] at (0, 0) {$\Gamma$};
    \node (down) at (0, -0.7) {$i$};
    \draw (gamma) -- (down);
    \draw (left) -- (gamma);
    \draw (right) -- (gamma);
  \end{tikzpicture}
  &= \frac{1}{\sqrt{N}}\,\delta_{i+j-k\,(\mathrm{mod}\,{N})}
  \text{,}
\end{align}
where the basis indexed by $i$ is the basis of eigenvectors of $\tau$, with $\tau\ket{i}=\omega^i\ket{i}$.
The bases indexed by $j$ and $k$ are contracted and summed over as the matrix product in \rf{eqn:mps}.
The environment vectors can be chosen as $\left(\eta\right| = \left(0\right|$ and $\left|\rho\right)=\left|q\right)$ where $q$ is the parafermionic parity of $\ket{\psi}$.
It's simple to verify that the state \rf{eqn:mps} is indeed normalised and also in the ground subspace of all the terms in the Hamiltonian.

We turn the equation which states the local transformation between the two states into an equation involving a gauge transformation, $A$, internal to the MPS, yielding
\begin{equation}
  \begin{tikzpicture}[baseline=-2ex]
    \node (Al) [draw] at (-0.85,0) {$A$};
    \node (Ar) [draw] at (+1.0,0) {$A^{-1}$};
    \node (U) [draw] at (0, -0.7) {$U$};
    \node (left) at (-1.5,0) {};
    \node (right) at (+1.8,0) {};
    \node (gamma) [draw] at (0, 0) {$\Gamma_4$};
    \node (down) at (0, -1.3) {};
    \draw (gamma) -- (U);
    \draw (U) -- (down);
    \draw (Al) -- (gamma);
    \draw (Ar) -- (gamma);
    \draw (left) -- (Al);
    \draw (right) -- (Ar);
  \end{tikzpicture}
  =
  \begin{tikzpicture}[baseline=-2ex]
    \node (left) at (-1,0) {};
    \node (right) at (+1,0) {};
    \node (gamma) [draw] at (0, 0) {$\Gamma_{2\times2}$};
    \node (down) at (0, -0.7) {};
    \draw (gamma) -- (down);
    \draw (left) -- (gamma);
    \draw (right) -- (gamma);
  \end{tikzpicture}
\end{equation}
where $\Gamma_4$ and $\Gamma_{2\times2}=\Gamma_2\otimes\Gamma_2$ are site tensors for the $\mathbb{Z}_4$ and $\mathbb{Z}_2\times\mathbb{Z}_2$ chain respectively and $U$ is the local action of the unitary transformation between the states.
This is a potentially difficult problem to solve in general, however the states we consider are highly structured.

For each $i$ we get a matrix-slice of the site-tensor $\Gamma_i$ which transforms under the gauge transformations as $\Gamma_i \mapsto A\Gamma_iA^{-1}$.
For the states we consider, we can use this gauge transformation in order to choose a basis in which every matrix-slice of both $\Gamma_4$ and $\Gamma_{2\times2}$ is simultaneously diagonal.
In this way the site tensors can be interpreted as matrices, linear maps from the diagonal entanglement vector space spanned by $j=k$ to the physical vector space spanned by $i$.
The equation for local-unitary equivalence of $\mathbb{Z}_4$ and $\mathbb{Z}_2\times\mathbb{Z}_2$ parafermion chains then becomes
\begin{equation}
  U\,
  \frac{1}{2}
  \left(\begin{array}{cccc}
     1 &  1 &  1 &  1 \\
     1 & -i & -1 &  i \\ 
     1 & -1 &  1 & -1 \\
     1 &  i & -1 & -i
  \end{array}\right)
  =
  \frac{1}{2}
  \left(\begin{array}{cccc}
     1 &  1 &  1 &  1 \\
     1 & -1 &  1 & -1 \\ 
     1 &  1 & -1 & -1 \\
     1 & -1 & -1 &  1
  \end{array}\right)\text{.}
  \label{eqn:matofeigs}
\end{equation}
This clearly has a solution for $U$ as both matrices are unitary.
The same method applies to relating $\mathbb{Z}_{pq}$ and $\mathbb{Z}_p\times\mathbb{Z}_q$ chains for any $p$ and $q$.

We now turn to examine excited states.
The $\tau$ operator creates a pair of excitations, a twist-antitwist pair. \red{...}
\begin{align}
  \begin{tikzpicture}[baseline=-3ex]
    \node (left) at (-0.7,0) {$j$};
    \node (right) at (+0.7,0) {$k$};
    \node (gamma) [draw] at (0, 0) {$\Gamma$};
    \node (tau) [draw] at (0, -0.7) {$\tau$};
    \node (down) at (0, -1.3) {$i$};
    \draw (gamma) -- (tau);
    \draw (left) -- (gamma);
    \draw (right) -- (gamma);
    \draw (tau) -- (down);
  \end{tikzpicture}
  &= \frac{1}{\sqrt{N}}\,\omega^i\delta_{i+j-k\,(\mathrm{mod}\,{N})}
\end{align}
which is diagonal in the same gauge as the unexcited site-tensor.
The effect of $\tau$ on the matrices of eigenvalues for $\mathbb{Z}_{pq}$ is to pre-multiply with a diagonal matrix $\mathrm{diag}(1,\omega,\omega^2,\cdots)$.
Considering the composite system $\mathbb{Z}_2\times\mathbb{Z}_2$ chain we can create excitations on either of the two chains.
The diagonal matrix thus formed features only $1$ and $-1$  as opposed to the powers of $\omega = e^{2\pi{}i/4}$ found for the $\mathbb{Z}_4$ case.
If we demand that Eq.\rf{eqn:matofeigs} is satsified after inserting diagonal unitaries it must be the case that the two diagonal matrices have the same eigenvalues.
For the case of $\mathbb{Z}_4$ and $\mathbb{Z}_2\times\mathbb{Z}_2$ this clearly isn't possible as their eigenvalues are distinct.
This is a special case of $\mathbb{Z}_p\times\mathbb{Z}_q$ for $p$ and $q$ with shared prime factors.
In conclusion, a single unitary cannot be chosen in this case to map all the eigenstates of the $\mathbb{Z}_p\times\mathbb{Z}_q$ chain onto the $\mathbb{Z}_{pq}$ chain.

This is perhaps unsurprising because no symmetric finite depth unitary circuit can map between distinct symmetry-protected topological phases~\cite{WenSPT}.
However, our situation is slightly different as the unitary we consider need not be symmetry preserving.
Furthermore, its locality is 
particularly restricted as a tensor product of site-local unitaries.
Even in this setting there does not exist a single unitary which transforms all the eigenstates of the $\mathbb{Z}_4$
chain to those of $\mathbb{Z}_{2}\times\mathbb{Z}_{2}$ chain.
Nevertheless, it is possible to find a different unitary in general for any single eigenstate.
Similarly,
for the string-net models it is clear that a unitary connecting $\mathbb{Z}_4$ and $\mathbb{Z}_2\times\mathbb{Z}_2$ for all states cannot exist because these topologically ordered phases cannot be adiabatically connected~\cite{WenLocalUnitaries}, but this doesn't preclude the possibility of local unitaries for individual states.

{\sl Distribution functions for Abelian string-nets.--} Let the degeneracy of the flat entanglement spectrum be $\chi = N^k$ where $N$ is the order of the associated Abelian group and $k = |\partial{A}| - b_0$ is the number of independent degrees of freedom on the $b_0$-component boundary, $\partial A$.
We seek the distribution $P(D_\mathcal{F})$ for these spectra obtained by varying either $N$ or $k$.
As an intermediate step we will find the density function for $P(a)$, or equivalently $P(\alpha)$, where $a = \log_2 \chi \pmod{1}$ and $\alpha = 2^a$ are variables describing the position of $\chi$ between powers of $2$.

Consider first evaluating $D_\mathcal{F}$ for varying $k\in \mathbb{N}$, i.e. the subsystem size,
while keeping $N$ fixed.
If $\log_2 N$ is irrational then $a = k \log_2 N \pmod{1}$ for each $k$, which uniformly samples the interval $[0,1]$ and hence $P(a) = 1$ over that interval.
In Fig.~\ref{fig:p(a) for varying k} we compare this prediction for fixed $N$ against the sample of $k$ up to $2^{15}-1$.
\begin{figure}[!htb]
  \includegraphics[width=0.95\columnwidth]{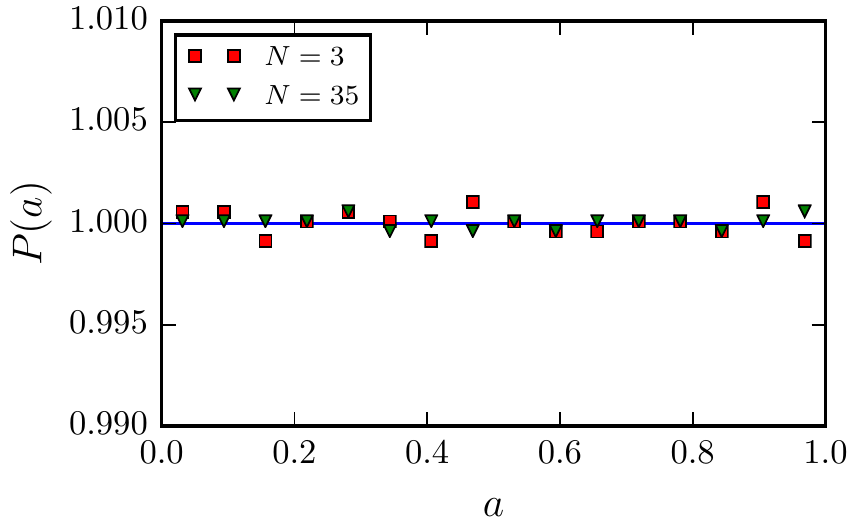}
  \caption{
    Distribution $P(a)$ for varying $k$ in the cases of $N=3$ and $N=35$.
    The line is the analytic result $P(a)=1$.
    The markers are numerical results for a sample up to $k=2^{15}-1$.
  }
  \label{fig:p(a) for varying k}
\end{figure}

If we instead fix $k=1$ and vary $N\in \mathbb{N}$,
while mapping each
interval between powers of $2$
back onto the interval $\alpha \in [1,2]$,
we find that it is densely and uniformly sampled in the limit of the infinite sequence, giving $P(\alpha)=1$.
The case $k=1$ refers to a rather unphysical subsystem containing zero vertices and one edge going through the cut, which is however equivalent to
cutting a single interval of lattice sites from the parafermion chain.
    
The situation is more complicated for $k > 1$.
For instance,
when $k=2$, region $A$ corresponds to
a disk enclosing a single trivalent vertex.
These sequences of spectra are also found in the parafermion chain when
region $A$ comprises $k$ disjoint intervals.
We first collapse all the natural numbers into the interval $[1,x]$ where $x = 2^{1/k}$ in the same manner as for $k=1$ to find $P(\beta)=1/(x-1)$, where $\beta$ identifies a corresponding real number for that integer in the interval.
Notice that $a = \log_x \beta$ and thus
\begin{equation}
  P(a) = P(\beta) \frac{\mathrm{d}\beta}{\mathrm{d}a} = \frac{\ln{2}}{k\left(2^{1/k}-1\right)} 2^{a/k}\text{.}
  \label{eqn:p(a) for varying n}
\end{equation}
In Fig.~\ref{fig:p(a) for varying n} we show the validity of our prediction for fixed $k$ for sample obtained by varying $N$ up to $2^{18}$.

\begin{figure}[!htb]
  \includegraphics[width=0.95\columnwidth]{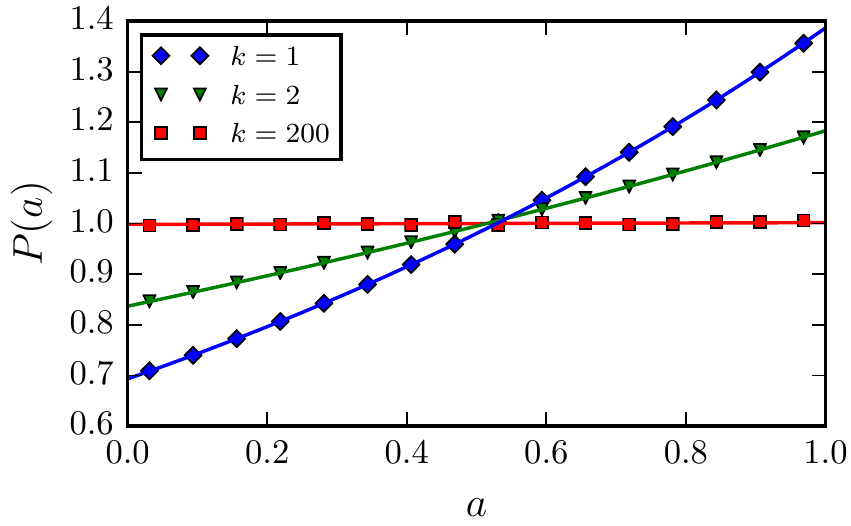}
  \caption{
    Distribution $P(a)$ obtained by varying $n$ for $k=1$,$2$, and $200$.
    The curves are the analytic result of \rf{eqn:p(a) for varying n}.
    The markers are numerical results for a sample up to $N=2^{18}$.
    Statistical errors are too small to be visible.
    We see that as $k$ is increased, $P(a)$ approaches the uniform distribution as expected.
  }
  \label{fig:p(a) for varying n}
\end{figure}

Armed with the distributions $P(a)$ or $P(\alpha)$ for the processes under discussion we can straightforwardly calculate the corresponding distributions for $D_\mathcal{F}$ for fixed $N$, 
\begin{equation}
P(D_{\mathcal{F}})=\frac{2/\ln 2}{\sqrt{1+D_{\mathcal{F}}(D_{\mathcal{F}}-6)}},
\end{equation}
which was presented in the main text, and for fixed large $k$ which is
\begin{equation}
  P(D_\mathcal{F}) = \frac{3 - D_\mathcal{F}}{\sqrt{1 + D_\mathcal{F}(D_\mathcal{F}-6)}}\text{.}
\end{equation}
Expanding the formula for fixed $k$ in a Taylor series for small $1/k$ we find
\begin{equation}
  P(a) = 1 + \left(a - \frac{1}{2}\right)\frac{\ln2}{k} + \frac{(a - \frac{1}{2})^2 - \frac{1}{12}}{2}\left(\frac{\ln2}{k}\right)^2 + O\left(\frac{1}{k^3}\right),
\end{equation}
and hence
\begin{equation}
  P(a) + P(1-a) = 1 + O\left(\frac{1}{k^2}\right)\text{.}
\end{equation}
This implies that corrections to the large-$k$ form of $P(D_\mathcal{F})$ are of order $1/k^2$ because $P(D_\mathcal{F})$ is proportional to $P(a) + P(1-a)$ for $a$ such that $D_\mathcal{F}(\bar\rho(2^a)) = D_\mathcal{F}$.
These are the only two values of $a$ to produce the same $D_\mathcal{F}$ and $D_\mathcal{F}$ viewed as a function of $a$ is invariant under the transformation $a \mapsto 1-a$.

{\sl Walker-Wang models.--} While the string-net models can be directly generalised to three spatial dimensions, a more powerful generalisation has been recently given in terms of the Walker-Wang models~\cite{WalkerWang}. These models allow non-trivial braiding of the charges giving a rich behaviour in their bulk and at their boundary~\cite{CurtSteve,KeyBurn}. 

The entanglement spectrum for topologically trivial cuts of a Walker-Wang model can be found in the same way as for string-nets given by $\rho_a= \left( \prod_{j\in a}d_{x_j} \right) /\mathcal{D}^{2(|\partial{A}|-1)}$ in the main text.
Nevertheless, partitions with non-trivial boundary topology reveal novel correlation properties~\cite{Alex}. 
To identify their effect on the interaction distance we take the region $A$ with a boundary topologically equivalent to a torus, as shown in Fig.~\ref{fourfigs}~(Left). Among the allowed configurations in the ground state is a braiding of loops with charges $x$ and $y$ supported in $A$ and $B$ respectively, connected by a string of charge $z$ piercing $\partial A$~\cite{Alex}. Thus, the probability spectrum should now encode information about the non-trivial braiding of the charges.
In Fig.~\ref{fourfigs}~(Right) we show $D_{\mathcal{F}}$ for toroidal cuts of non-Abelian SU(2)$_k$ Walker-Wang models as a function of the level $k\geq2$.
Compared to the spherical cut we see that the interaction distance depends not only on the geometry of the cut but also on the topology of $\partial A$.
Its non-zero value indicates the necessity of interactions
for the existence of non-Abelian topological order also in three dimensions.

\begin{figure}
	\subfloat{\includegraphics[width = .43\linewidth]{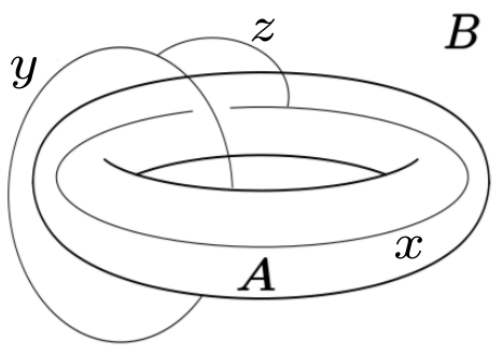}}
	\subfloat{\includegraphics[width = .55\linewidth]{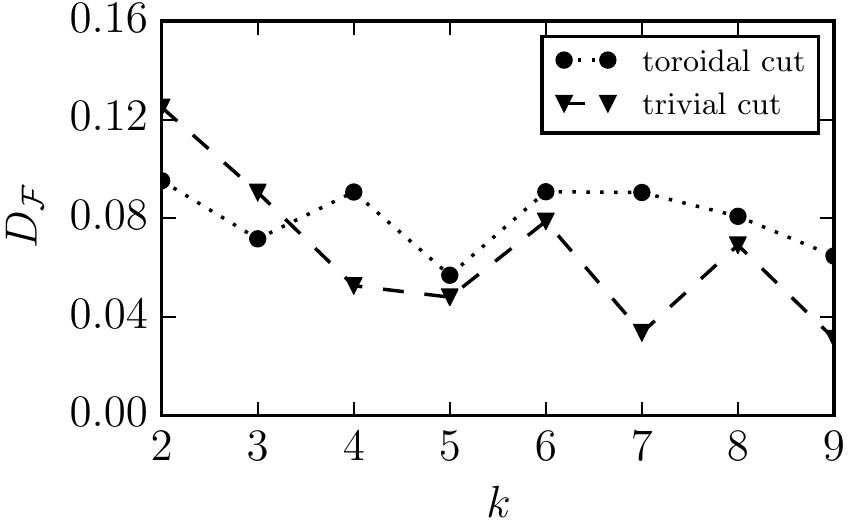}}
	\caption{Shows a topologically non-trivial toroidal cut of a Walker-Wang model (left, ~\cite{Alex}). A plot of $D_\mathcal{F}$ against $k$-level for a toroidal and a topologically trivial partition, both with $|\partial A|=3$ (right).}
	\label{fourfigs}
\end{figure}

\end{document}